\documentclass[5p,times]{elsarticle}
\usepackage{lineno,hyperref}
\modulolinenumbers[5]
\usepackage{comment}
\usepackage{booktabs, multirow} 
\usepackage{soul}
\usepackage[table]{xcolor} 
\usepackage{changepage,threeparttable}
\usepackage{amsmath}
\usepackage{amsthm}
\usepackage{graphicx}
\usepackage{mathrsfs}
\usepackage{graphicx}
\usepackage{rotating}
\usepackage{stackrel}
\usepackage{cancel}
\usepackage{multicol}


\usepackage{amsmath, amssymb, bbm, xspace}
\usepackage{epsfig}
\usepackage{longtable}
\usepackage{color}
\usepackage{mathrsfs}
\usepackage{comment}
\usepackage{ifthen}
\newboolean{showcomments}
\setboolean{showcomments}{true}
\usepackage{courier}
\usepackage{xcolor}
\usepackage{todonotes}
\usepackage[framemethod=tikz]{mdframed}
\usepackage{lineno}
\newmdenv[leftmargin=\dimexpr-0.4em, innerleftmargin=0.5em,
rightmargin=\dimexpr-0.4em, innerrightmargin=0.5em,
linewidth=2pt,linecolor=red, topline=false, bottomline=false,
innertopmargin=0pt,innerbottommargin=0pt,skipbelow=0pt,skipabove=0pt,%
]{notex}

\newenvironment{note}%
{\vskip\dimexpr\dp\strutbox-\prevdepth\relax\notex\strut\ignorespaces}%
{\xdef\notetpd{\the\prevdepth}\endnotex\vskip-\notetpd\relax}

\let\oldtodo\todo

\makeatletter%
\DeclareDocumentCommand{\todo}{ O{} +g +d<> }{%
		\setlength{\marginparwidth}{1.5cm}%
	\IfNoValueTF{#2}{\relax}{%
		\oldtodo[caption={#2},size=\scriptsize,#1]{\renewcommand{\baselinestretch}{0.8}\selectfont\sffamily#2\par}%
	}%
	\IfNoValueTF{#3}{\relax}{%
		\IfNoValueTF{#2}{
			\begin{note}%
				\begin{internallinenumbers}%
					\indent%
					#3%
				\end{internallinenumbers}%
			\end{note}%
		}{
			\vspace{-0\baselineskip}%
			\begin{note}%
				\begin{internallinenumbers}%
					\indent%
					#3%
				\end{internallinenumbers}%
			\end{note}%
		}%
	}%
}%
\makeatother
\usepackage{soul}

\newcommand{\hlc}[2][yellow]{{%
		\colorlet{foo}{#1}%
		\sethlcolor{foo}\hl{#2}}%
}

\newtheorem{conjecture}{Conjecture}[section]

\newcommand{\removetodo}[2]{\todo[color=pink]{\textbf{delete:} ``#1'' #2}\hlc[pink]{#1}}
\newcommand{\inserttodo}[1]{\todo[color=green!40]{\textbf{insert:} #1}}

\newcommand{\hltodoy}[2]{\todo[color=yellow!40]{#2}\hl{#1} }

\newcommand{\hltodoc}[3]{\todo[color=#3!40]{#2}\hlc[#3]{#1} }

\newcommand{\hltodo}[2]{\todo[color=orange!40]{#2}\hlc[orange!40]{#1} }
\newcommand{\replacetodo}[2]{\todo[color=pink!40]{\textbf{replace with:}``#2'' }\hl{#1} }

\usepackage{marginnote}
\newcommand{\todol}[1]{{%
		\let\marginpar\marginnote
		\reversemarginpar
		\renewcommand{\baselinestretch}{0.8}%
		\todo{#1}}}

\newcommand{\inserttodol}[1]{{%
		\let\marginpar\marginnote
		\reversemarginpar
		\renewcommand{\baselinestretch}{0.8}%
		\inserttodo{#1}}}

\newcommand{\removetodol}[2]{{%
		\let\marginpar\marginnote
		\reversemarginpar
		\renewcommand{\baselinestretch}{0.8}%
		\removetodo{#1}{#2}}}

\newcommand{\hltodol}[2]{{%
		\let\marginpar\marginnote
		\reversemarginpar
		\renewcommand{\baselinestretch}{0.8}%
		\hltodo{#1}{#2}}}

\newcommand{\replacetodol}[2]{{%
		\let\marginpar\marginnote
		\reversemarginpar
		\renewcommand{\baselinestretch}{0.8}%
		\replacetodo{#1}{#2}}}

\newcommand{\hltodoyl}[2]{{%
		\let\marginpar\marginnote
		\reversemarginpar
		\renewcommand{\baselinestretch}{0.8}%
		\hltodoy{#1}{#2}}}

\newcommand{\hltodocl}[3]{{		\let\marginpar\marginnote
		\reversemarginpar
		\renewcommand{\baselinestretch}{0.8}%
		\hltodoc{#1}{#2}{#3}}}



\newtheorem{theorem}{Theorem}[section]

\newtheorem{lemma}[theorem]{Lemma}

\newtheorem{proposition}[theorem]{Proposition}

\newtheorem{definition}{Definition}[section]

%

\def\bkE{{\rm I\kern-.17em E}}
\def\bk1{{\rm 1\kern-.17em l}}
\def\bkD{{\rm I\kern-.17em D}}
\def\bkR{{\rm I\kern-.17em R}}
\def\bkP{{\rm I\kern-.17em P}}

\def\bkZ{{\bf{Z}}}

\def\bkE{{\rm I\kern-.17em E}}
\def\bk1{{\rm 1\kern-.17em l}}
\def\bkD{{\rm I\kern-.17em D}}
\def\bkR{{\rm I\kern-.17em R}}
\def\bkP{{\rm I\kern-.17em P}}

\makeatletter
\newcommand{\pushright}[1]{\ifmeasuring@#1\else\omit\hfill$\displaystyle#1$\fi\ignorespaces}
\newcommand{\pushleft}[1]{\ifmeasuring@#1\else\omit$\displaystyle#1$\hfill\fi\ignorespaces}
\makeatother


\def\bkZ{{\bf{Z}}}
\def\b12{(\beta_1,\beta_2)}

\newenvironment{example}{{\noindent \bf Example}}{\hfill $\square$\hspace{-4.5pt}\vspace{6pt}}
\newcounter{example}
\renewcommand{\theexample}{\thesection.\arabic{example}}
\newenvironment{examplec}[1][]{\refstepcounter{example}
\par\medskip \noindent%
   \textbf{Example~\theexample. #1} \rmfamily}{\hfill $\square$   \hspace{-4.5pt} \vspace{6pt}}

\newcounter{remark}
\renewcommand{\theremark}{\thesection.\arabic{remark}}

\def\Bscr{\mathscr{B}}
\def\Xscr{\mathcal{X}}
\def\Yscr{\mathcal{Y}}
\def\indep{\perp\!\!\!\!\perp}

\newlength{\noteWidth}
\setlength{\noteWidth}{.75in}
\long\def\notes#1{\ifinner
{\tiny #1}
\else
\marginpar{\parbox[t]{\noteWidth}{\raggedright\tiny #1}}
\fi\typeout{#1}}

 \def\notes#1{\typeout{read notes: #1}} 



\newcommand{\I}[1]{\mathbb{I}_{\{#1\}}}







\newcommand{\ie}{i.e.\@\xspace} 
\newcommand{\eg}{e.g.\@\xspace} 







\def\spose#1{\hbox to 0pt{#1\hss}}

\def\text #1{\hbox{\quad#1\quad}}


\def\nthinsp{\mskip -2   mu}




\def\superstar{^{\raise 0.5pt\hbox{$\nthinsp *$}}}
\def\SUPERSTAR{^{\raise 0.5pt\hbox{$*$}}}

\def\lamstarT {\lambda^{\raise 0.5pt\hbox{$\nthinsp *$}T}}



\def\Ascr{{\cal A}}
\def\Bscr{{\cal B}}

\def\Pscr{{\cal P}}

\def\Wscr{{\cal W}}

\def\Xscr{{\cal X}}
\def\Yscr{{\cal Y}}

\def\eef{\;\textrm{if}\;}

\def\non{\nonumber}

\let\forallnew\forall
\renewcommand{\forall}{\forallnew\ }
\let\forall\forallnew

		\def\bkE{{\rm I\kern-.17em E}}
		\def\bk1{{\rm 1\kern-.17em l}}
		\def\bkD{{\rm I\kern-.17em D}}
		\def\bkR{{\rm I\kern-.17em R}}
		\def\bkP{{\rm I\kern-.17em P}}
		\def\bkY{{\bf \kern-.17em Y}}
		\def\bkZ{{\bf \kern-.17em Z}}
		\def\bkC{{\bf  \kern-.17em C}}


%
{\begin{list}{}%
         {\setlength{\leftmargin}{#1}}%
         \item[]%
}
{\end{list}}

		\def\bsp{\begin{split}}
		\def\beq{\begin{eqnarray}}
		\def\bal{\begin{align*}}
		\def\bc{\begin{center}}
		\def\be{\begin{enumerate}}
		\def\bi{\begin{itemize}}
		\def\bs{\begin{small}}
		\def\bS{\begin{slide}}
		\def\ec{\end{center}}
		\def\ee{\end{enumerate}}
		\def\ei{\end{itemize}}
		\def\es{\end{small}}
		\def\eS{\end{slide}}
		\def\eeq{\end{eqnarray}}
		\def\eal{\end{align*}}
		\def\esp{\end{split}}
		\def\qed{ \vrule height7.5pt width7.5pt depth0pt}  

	\def\cp2problem#1#2#3#4{\fbox
		 {\begin{tabular*}{0.9\textwidth}
			{@{}l@{\extracolsep{\fill}}l@{\extracolsep{6pt}}l@{\extracolsep{\fill}}c@{}}
				#1 & & $#4 $
			\end{tabular*}}}

		\def\bkE{{\rm I\kern-.17em E}}
		\def\bk1{{\rm 1\kern-.17em l}}
		\def\bkD{{\rm I\kern-.17em D}}
		\def\bkR{{\rm I\kern-.17em R}}
		\def\bkP{{\rm I\kern-.17em P}}

		\def\bkZ{{\bf{Z}}}

\newcommand {\beeq}[1]{\begin{equation}\label{#1}}
\newcommand {\eeeq}{\end{equation}}
\newcommand {\bea}{\begin{eqnarray}}
\newcommand {\eea}{\end{eqnarray}}

\def\texitem#1{\par\smallskip\noindent\hangindent 25pt
               \hbox to 25pt {\hss #1 ~}\ignorespaces}



\def\bsp{\begin{split}}
		\def\beq{\begin{eqnarray}}
		\def\bal{\begin{align*}}
		\def\bc{\begin{center}}
		\def\be{\begin{enumerate}}
		\def\bi{\begin{itemize}}
		\def\bs{\begin{small}}
		\def\bS{\begin{slide}}
		\def\ec{\end{center}}
		\def\ee{\end{enumerate}}
		\def\ei{\end{itemize}}
		\def\es{\end{small}}
		\def\eS{\end{slide}}
		\def\eeq{\end{eqnarray}}
		\def\eal{\end{align*}}
		\def\esp{\end{split}}
		\def\qed{ \vrule height7.5pt width7.5pt depth0pt}  


\usepackage{amsmath, amssymb, xspace}
\usepackage{epsfig}
\usepackage{longtable}
\usepackage{color}
\usepackage{mathrsfs}
\usepackage{subfig}

\journal{System and Control letters}









\bibliographystyle{elsarticle-num}

\begin{document}
	\begin{frontmatter}

		\title{{\bf Revisiting Common Randomness, No-signaling and  Information Structure in Decentralized Control}} 

\author{Apurva Dhingra\fnref{cminds}}
\address[cminds]{Center of Machine Intelligence and Data Science, CMInDS, Indian Institute of Technology Bombay, Mumbai 400076}
\ead{apurva.dhingra@iitb.ac.in}

\author{Ankur A. Kulkarni\fnref{cminds,sys,qui} \corref{id}}
\cortext[id]{Corresponding author}
\ead{kulkarni.ankur@iitb.ac.in}
\address[sys]{Systems and Control Engineering, Indian Institute of Technology Bombay, Mumbai 400076}
\address[qui]{Center of excellence in Quantum Information, Computing Science, and Technology, QuICST, Indian Institute of Technology Bombay, Mumbai 400076}

\begin{abstract}
	This work revisits the no-signaling condition for decentralized information structures. We produce examples to show that within the no-signaling polytope exist strategies that cannot be achieved by passive common randomness but instead require agents to either share their observations with a mediator or communicate directly with each other. This poses a question mark on whether the no-signaling condition truly captures the decentralized information structure in the strictest sense.


\end{abstract}

\begin{keyword}
	no-signaling, decentralized control, common randomness, information structure
	\MSC[2010] 00-01\sep  99-00
\end{keyword}

\end{frontmatter}



\section{Introduction}

We consider a decentralized team of agents bound by a static information structure. These agents have access only to their own information, without in-game access to a central coordinating device. The static information structure demands that agents cannot affect the observations of other agents, implying that agents cannot communicate with each other either.  Such scenarios arise, for example, in sensor networks like fire alarms.

In such a setting, the agents must devise a strategy  --  a conditional probability of choosing an action profile given an observation profile -- for optimizing a certain cost. Three main types of strategies exist: deterministic, behavioral, and local. Under a deterministic or pure strategy, agents choose their actions with certainty given their observations. In a behavioral strategy agents locally and independently randomize over actions, conditioned on their observations. Lastly, in a local strategy  agents have access to an external passive common randomness along with their own observation, and using these the agents choose their actions by locally and independently randomizing. All these strategies respect the information structure, since there is no in-game communication.

Another way of thinking of the information structure is as follows: the essence of the static information structure is that the information of the agent must be a sufficient statistic for its action. Consequently, any strategy must respect that an agent's action given its information must be independent of the other agents' information.
In physics, this mathematical condition is called the 'no-signaling condition', and is a universal law requiring that no information can travel instantaneously through space. There seems to be a belief in the control and physics communities that this no-signaling condition is equivalent to the requirement of the information structure, i.e., that the agents do not communicate during the game.

A seminal work  of Anantharam and Borkar in decentralized control \cite{ananthram2007commonrandom} takes this belief forward. It asks if every no-signaling strategy can be generated using passive common randomness. They show that there exist strategies within the no-signaling polytope that cannot be created with any amount of  common randomness. In other words, passive common randomness has limitations in accessing all distributions in the no-signaling polytope. Their finding emphasizes that specific strategies within no-signaling polytope are infeasible using mechanisms like behavioral and local strategies. This opens the question of whether there are more general mechanisms that respect no-signaling and information structure requirement; indeed there are quantum strategies, \cite{deshpande2022quantum,deshpande2022binary1,deshpande2022binary2},that do meet this requirement.

Our contribution in this paper is a re-look at the no-signaling condition, and asking if it does indeed capture the information structure. It is quite clear that satisfaction of the no-signaling condition is necessary for the information structure to hold. Our focus is on the \textit{converse}.
We demonstrate that within the no-signaling polytope, strategies exist that do not respect the information structure. In particular there are no-signaling strategies that can \textit{only} be created  using \textit{active} common randomness, \ie, a coordinating device that is \textit{dependent} on the observations of both agents. Such coordination allows for indirect exchange of information between agents, and is a violation of the information structure. We show also that the above no-signaling strategies that cannot be created using passive common randomness, \textit{can} be created if the passive common randomness is combined with one-way communication between the agents. In other words, if the information structure allows agents to play such strategies, it is effectively allowing them access to either an active coordinating device, or one-way communication. In summary, these results show that all strategies  within the no-signaling polytope are not necessarily feasible as per the `no communication' requirement of the information structure. A strategy that satisfies no-signaling conditions can still violate the static information structure.

We release this work with a tinge of diffidence: our results are shown by numerical construction of counterexamples and protocols, and the mathematics involved is not at all deep. However, to the best of our understanding, our results expose that the control community must either define the information structure requirements more broadly to match up with the no-signaling condition, or accept that the no-signaling conditions capture much more than the requirement of `no communication' in the strictest sense. They also highlight the need for further examination and a deeper understanding of the no-signaling condition.

This article has been organized as follows. Section 2 provides the background on the decentralized control setting. Section 3 includes the results of Anantharam and Borkar \cite{ananthram2007commonrandom}. It also features a schematic representation of the geometry of no-signaling. Section 4 presents the our main contributions, while Section 5 presents a `posterior' perspective on no-signaling. We conclude in Section 6 with a short discussion.

\section{Background}
\subsection{Notations}
We use $ (\Omega , \mathcal{F}, P)$ to denote a probability space, with $\Omega$ as the sample space, $\mathcal{F}$ as event space and $P$ as the probability measure. All random variables in this paper are discrete and all nonempty events take positive probability. We use capital letters $X,Y,$ and so on, to denote random variables, small letters $x,y$ to denote their realizations and the calligraphic letters $\mathcal{X}, \mathcal{Y}$ to denote the space where they take their values. We let $\Pscr(\Xscr)$ denote the set of probability distributions on $\Xscr$.
For $P \in \Pscr(\Xscr)$ we use $P(x)$ to denote the probability $P(X =x)$. When we want to make a statement for all realizations $x$ of $X$, we use $P(X)$.  With slight abuse of notation, we use $P(X)$ to also denote the probability distribution of $X$.
Similarly, $P(X |  A) $ denotes conditional probability of $X$ given $A$, and $P(x |  a) $ is a particular realization with $X=x$ and $A=a$. The notation `$X \indep Y\ |A$' stands for `$X$ and $Y$ are independent given $A$'.
We use $I(W; A, B)$ to denote the mutual information between $W$ and $(A, B)$. Similarly, $I(X; B |  A)$ denotes the conditional mutual information between $X$ and $B$ given $A$ (more details on properties of mutual information in \cite{cover2012elements}). For $a,b \in \{0,1\}$, $a \oplus b$ denotes addition modulo 2, and $a. \ b$ denotes logical \textit{AND} (multiplication) operation.

\newcommand{\A}{Venkat}
\newcommand{\B}{Vivek}
\newcommand{\Wit}{Hans}

\subsection{Common randomness and decentralized control}

Consider a decentralized system consisting of two agents, \A{} and \B{}. Each agent observes a random variable, denoted as $A \in \mathcal{A}$ and $B \in \mathcal{B}$ respectively, distributed according to a  joint distribution $P(A, B)$. The agents are required to choose their actions, denoted as $X \in \mathcal{X}$ and $Y \in \mathcal{Y}$ based on their private observations. A strategy for \A{} and \B{} is defined as a conditional joint distribution $P(X,Y |  A,B)$ of their actions $X,Y$ given their observations $A,B$.

These strategies satisfy the positivity and normalization conditions of any conditional probability distribution, and must also respect the information structure of the problem.  The decentralized information structure demands that
\A{} and \B{} do not communicate with each other after receiving their observations $A,B$.
For some strategies, \A{} and \B{} may be helped by common randomness $W$ drawn from some finite set $\Wscr$ provided by an agent \Wit{}\footnote{Through the choice of agent names, we wish to pay tribute to three of our scientific heroes: Venkat \textbf{A}nantharam, Vivek \textbf{B}orkar and Hans \textbf{W}itsenhausen.}. Notice that we have not defined what it means to ``communicate'', and indeed as we shall see in this paper, the root of our main observations lies precisely in defining what amounts to communication. For now, we shall stick to the colloquial meaning of communication.

This leads to various strategic classes for \A{} and \B{} defined by specific operational mechanisms that they can employ.

\def\Brs{\mathscr{B}}
\def\Lrs{\mathscr{L}}

\subsubsection{Behavioural strategies $\Brs$:}
In a behavioural strategy, \A{} and \B{} choose $X,Y$ upon observing $A,B$ by locally and independently randomizing.
In other words, $P$ satisfies,
\begin{align}\label{eq: local_stat}
P(X , Y |  A, B)
& = P(X |  A)   P(Y |  B).
\end{align}
The set of all such strategies is denoted by $\Brs$. One can see that  \eqref{eq: local_stat} is equivalent to the requirements
\begin{align*}
X & \indep Y \ |A,B, \\
X & \indep B \ |A, \\
Y & \indep A \ |B.
\end{align*}
The last two requirements above comprise the \textit{no-signaling} condition, a subject of our focus in this paper. We discuss them further in Section~\ref{sec: no-signal}.


\subsubsection{Local strategies $\Lrs$:} \label{sec:local_stat}
A local strategy $P(X , Y |  A, B) $  is  any strategy for which there exists $W \in \Wscr$, independent of $(A,B)$ with probability distribution $P(W)$, such that
\begin{align}\label{eq:local_poly}
P(X,Y, A,B,W) &= \ P(W)P(A,B) P(X |  A,W) \ P(Y |  B,W).
\end{align}
The above factorization is equivalent to,
\begin{align}
X &\indep Y  \  | A,B,W, \non \\
X &\indep B \  | A,W, \label{eq:nseq}\\
Y &\indep A \  | B,W, \non\\
W &\indep (A,B).
\end{align}
We thus get the strategy,
\begin{equation}
P(X,Y|A,B)= \sum_{W\in \Wscr} P(W)P(X|A,W)P(Y|B,W). \label{eq:local}
\end{equation}
The mechanism that creates a local strategy is as follows. Before heading to the field to observe $A,B$, \A{} and \B{} are given access to $W$, drawn  according to $P(W)$. When in the field, they make use of their knowledge of $W$ and their observations to choose their actions by locally and independently randomizing (as in strategies in $\Brs$). \Wit{} produces $W$ without reference to $(A,B)$; hence $W$ is independent of $(A,B)$. The set  of all local strategies forms a polytope, called the \textit{local polytope}, denoted as $\Lrs$. In fact $\Lrs$ is the convex hull of $\Brs$ (see \cite{saldi2022geometry}, \cite{deshpande2022quantum} for more details).

The random variable $W$ is called \textit{common randomness}. If, as above, $W$ is independent of $(A,B)$, then $W$ is called \textit{passive}, otherwise we refer to $W$ as \textit{active}. For passive common randomness $W$,
\begin{align}\label{eq: passive}
P(W,A,B) &= P(W) P(A,B),
\end{align}
while for active common randomness, this equation does not hold for some $a\in \Ascr,b\in \Bscr, w\in \Wscr$.

\def\NS{\mathscr{NS}}

\subsection{No-signaling strategies $\NS$}\label{sec: no-signal}
We now come to the question of information structure, or equivalently communication, and its meaning. Concretely, we ask: what is the set of \textit{all} strategies that are allowed by the information structure of our problem? In stochastic control, it is common to define the allowed strategy space by means of explicit mechanisms, as in $\Brs$ and $\Lrs$. One can of course define specific mechanisms that respect the information structure, but what is to say that more general mechanisms cannot be devised? Indeed more general mechanisms \textit{are} known. A case in point are those that deploy entangled \textit{quantum} systems to create randomness, see \eg, \cite{deshpande2022quantum}. These too respect the information structure and describe a set strictly larger than $\Lrs$. While that does push the boundary, it still fails to answer what is the universal set of strategies that \A{} and \B{} can play?

This brings us to one of the central topics of this paper, the no-signaling condition.
\begin{definition}
	A strategy distribution $P(X , Y |  A, B)$ is said to be no-signaling if  \A{}'s action $X$  is independent of $B$ given $A$, and \B{}'s action $Y$ is independent of $A$ given $B$, \ie,
	\begin{align}
		P(X |  A,B) = P(X |  A),  \nonumber\\
		P(Y |  A,B) = P(Y |  B). \label{eq:ns}
	\end{align}
	These  are called no-signaling conditions and we denote by $\NS$ the set of all strategies that satisfy \eqref{eq:ns}.
\end{definition}
When players cannot communicate with each other, each player's observation must be a sufficient statistic for its action. This is precisely the condition demanded by \eqref{eq:ns}. It appears plausible that $\NS$ is in fact the set of strategies that respect the information structure. Indeed in \cite{ananthram2007commonrandom}\footnote{p.\ 570, first column at the bottom of the page}, Anantharam and Borkar say,
\begin{quote}
	The salient characteristic of the distributed creation of the pair $(X, Y )$ from
	$(A, B)$ is that $X$ is created with access to $A$ but without reference to $B$ and $Y$ is created with access to $B$ but without
	reference to $A$. Thus it is natural to conjecture that for every
	$(X, Y, A, B)$ satisfying the conditions,
	\begin{align}
		(A,B) & \sim P(A,B),\non \\
		I(X;B|A) & = 0,  \label{eq:ananthcommon} \\
		I(Y;A|B) & = 0, \non
	\end{align}
	there must exist a $W$ (on a possibly augmented
	sample space) such that $(X, Y, A, B, W )$ satisfy conditions \eqref{eq:local_poly}\footnote{From the present paper}.
\end{quote}
Note that \eqref{eq:ananthcommon} is equivalent to \eqref{eq:ns}.
Anantharam and Borkar proceed to give a counterexample to the above conjecture. We discuss this counterexample in the next section. For now we note that Anantharam and Borkar are implicitly implying that \textit{every} strategy in $\NS$ is allowed by the information structure; indeed it is worthwhile seeking a $W$ as above only if one believes that every strategy in $\NS$ is a valid strategy for the problem. Any strategy that respects the information structure of the problem is indeed no-signaling (it is easy to verify that $\Brs \subseteq \Lrs \subseteq \NS $; quantum strategies from~\cite{deshpande2022quantum} are also no-signaling). However, Anantharam and Borkar seem to have in mind that the \textit{converse} also holds.

The term `no-signaling' originated in physics as a universal law that cannot be violated by any physical system. In physics too there seems to be a similar belief. In \cite[p.\ 80,  first paragrpah of Section 4.4]{laloe2012we}, Laloe says,
\begin{quote}
	The theory of relativity requires that it should be fundamentally impossible to transmit signals containing information between two distant points faster than the speed of light (relativistic causality); suppressing this absolute limit would lead to serious internal inconsistencies in the theory.

	$\ldots$

	From a general point of view, and without restricting the discussion to quantum
	mechanics, what are the general mathematical conditions ensuring that a theory is
	“non-signaling” (NS conditions), meaning that it does not allow Alice to transmit
	instantaneous signals to Bob (and conversely)?

	$\ldots$

	When the experiment is repeated, since Bob does not have access to Alice’s
	results, the only thing that he can measure is the occurrence frequency of his own
	results. This corresponds to the probabilities obtained by a summation over $X$ of
	the preceding probabilities (sum of probabilities associated with exclusive events):
	\begin{equation*}
		\sum_X P(X,Y|  A,B)
	\end{equation*}
	The NS condition amounts to assuming that this probability is independent of $a$;
	we therefore obtain the condition:
	\begin{equation*}
		\sum_X P(X,Y|  a,b) = \sum_X P(X,Y|  a',b) \ \ \forall  \ b.
	\end{equation*}
\end{quote}
Note that the above equation is in fact Eq \eqref{eq:ns}.

From the above quotes it is apparent that the view in the scientific community at large is that $\NS$ is the set of strategies allowed by the information structure. Our main finding is that things are not as simple. There are strategies that allow \A{} and \B{} to exchange information that are also included in $\NS$. These include strategies where \textit{active} common randomness is provided by \Wit{}, and it is exploited by \A{} and \B{} in choosing their actions. Another possible protocol is using passive common randomness and classical one-way communication from \A{} to \B{}.

\section{No-signaling with passive common randomness}\label{sec: method1}
\def\strat{P(X,Y |  A,B)}

\subsection{Anantharam and Borkar's conjecture and counterexample}
As mentioned in the previous section, Anantharam and Borkar seem to take the view that every strategy in $\NS$ is allowable in the information structure. Taking this thought forward, they next ask what mechanism may the agents use to achieve strategies in $\NS$. They make (the very reasonable) conjecture that every point in $\NS$ is achievable through passive common randomness.

\begin{conjecture}\label{conj:conj0}\cite{ananthram2007commonrandom}
	Given any strategy $P(X , Y |  A, B) $  that satisfies the  no-signaling condition \eqref{eq:ns}, there exists some passive common randomness $W$ and a joint distribution on $(X,Y,A,B,W)$ such that $\strat$ satisfies \eqref{eq:local}.
\end{conjecture}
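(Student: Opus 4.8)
This statement is presented as a \emph{conjecture}, and --- as the surrounding text has already signalled --- Anantharam and Borkar themselves refute it; so the ``proof'' to aim for is in fact a \emph{disproof}. Since, by \eqref{eq:local}, the local polytope $\Lrs$ is exactly the set of strategies realizable through passive common randomness, the conjecture asserts $\NS \subseteq \Lrs$, hence $\NS = \Lrs$ (the inclusion $\Lrs \subseteq \NS$ always holds). The plan is therefore to produce one strategy in $\NS \setminus \Lrs$. The engine behind any such counterexample is the classical fact that already in the smallest nontrivial scenario the no-signaling polytope strictly contains the local polytope.

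Concretely, I would fix the binary scenario $\Ascr = \Bscr = \Xscr = \Yscr = \{0,1\}$ with $A,B$ independent and uniform, and take the candidate
\begin{equation*}
P(X, Y \mid A, B) \;=\;
\begin{cases}
\tfrac12 & \text{if } X \oplus Y = A \cdot B,\\
0 & \text{otherwise.}
\end{cases}
\end{equation*}
That this $P$ lies in $\NS$ is immediate: for every $a,b$ one has $P(X = x \mid A = a, B = b) = \sum_{y} P(X = x, Y = y \mid a, b) = \tfrac12$, independent of $b$, so $P(X \mid A, B) = P(X \mid A)$, and the symmetric computation gives $P(Y \mid A, B) = P(Y \mid B)$. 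Hence \eqref{eq:ns} holds.

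The heart of the disproof is to show $P \notin \Lrs$ with a separating (CHSH-type) inequality. As $\Lrs$ is the convex hull of the deterministic strategies $P(X, Y \mid A, B) = \I{X = f(A)}\,\I{Y = g(B)}$ over $f, g : \{0,1\} \to \{0,1\}$, it suffices to exhibit a linear functional that is bounded on every such vertex yet exceeded by the candidate. Take
\begin{equation*}
\Phi(P) \;=\; \sum_{a, b \in \{0,1\}} P\big(X \oplus Y = a \cdot b \,\big|\, A = a, B = b\big).
\end{equation*}
On a deterministic vertex the four requirements $f(a) \oplus g(b) = a \cdot b$ cannot hold simultaneously: summing them modulo $2$, the left side vanishes (each of $f(0), f(1), g(0), g(1)$ occurs twice) while the right side equals $1$. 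Thus at most three of the four terms are $1$, so $\Phi \le 3$ on every vertex and, by linearity, $\Phi \le 3$ throughout $\Lrs$; but $\Phi(P) = 4$, so $P \notin \Lrs$ and the conjecture fails.

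The step I expect to be the crux is this last one --- isolating the functional $\Phi$ and checking $\Phi \le 3$ over the deterministic strategies; everything else is a normalization identity and elementary arithmetic. Equivalently, this is precisely where an \emph{attempted} proof breaks: no passive $W$ together with kernels $P(X \mid A, W)$, $P(Y \mid B, W)$ can reproduce these correlations, since the bound $\Phi \le 3$ survives averaging over $W$. The counterexample eventually presented in the paper may differ in its specifics, but any counterexample must operate on this principle --- locate a no-signaling strategy lying outside the local polytope, certified by a Bell-type inequality.
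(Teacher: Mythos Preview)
Your disproof is correct and follows essentially the same route as the paper's own argument in the Appendix: exhibit a binary no-signaling strategy and certify $P\notin\Lrs$ via the CHSH inequality. Your functional $\Phi(P)=\sum_{a,b}P(X\oplus Y=a\cdot b\mid a,b)\le 3$ on $\Lrs$ is exactly the paper's CHSH bound $\sum_{a,b}(-1)^{a\cdot b}\langle(-1)^{X+Y}\rangle_{a,b}\le 2$ rewritten (the two are related by $\Phi=\tfrac12(\text{CHSH})+2$).

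The only difference is the choice of witness strategy. You take the PR box --- the non-local vertex with $\alpha=\beta=\gamma=0$ in the paper's classification \eqref{eq:non-local} --- which achieves the extremal $\Phi=4$. The paper instead uses its Example~\ref{Table : ccex2}, a non-extremal point with $\Phi=10/3$ (equivalently CHSH value $8/3$), chosen not for simplicity but because it is reused in Sections~\ref{sec: method2}--\ref{sec:one-way} to illustrate active common randomness and one-way communication protocols. The paper also recounts Anantharam and Borkar's original ternary counterexample (Example~\ref{Table : ccex}), whose proof is a direct combinatorial argument rather than a Bell inequality, but that proof is only cited, not reproduced. Your PR-box choice is the cleanest possible for the bare purpose of refuting the conjecture.
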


Anantharam and Borkar present a counterexample to this conjecture. We recount it below.

\begin{examplec}\label{Table : ccex} This is a counterexample to Conjecture \ref{conj:conj0}.
	\begin{table}[!htp]
		\centering
		\scriptsize
		\begin{center}
			\begin{tabular}{c c c c c c  c c c c c}
				\midrule
				$(a,b)$        &  &               & $P(x,y |  a,b)$   &                & \quad & $(a,b)$ &  &               &  $P(x,y |   a,b)$   &                \\
				[1ex]
				\midrule
				&  & $\frac{1}{3}$ & 0             & 0              &       &       &  & 0             & $\frac{1}{3}$ & 0              \\[1ex]
				(1,1)        &  & 0             & $\frac{1}{3}$ & 0              &       & (1,0) &  & $\frac{1}{3}$ & 0             & 0              \\[1ex]
				&  & 0             & 0             & $\frac{1}{3} $ &       &       &  & 0             & 0             & $\frac{1}{3} $ \\
				[1ex]
				\midrule
				&  & 0             & $\frac{1}{3}$ & 0              &       &       &  & 0             & 0             & $\frac{1}{3}$  \\
				[1ex]

				(0,1) &  & 0             & 0             & $\frac{1}{3}$  &       & (0,0) &  & 0             & $\frac{1}{3}$ & 0              \\[1ex]
				&  & $\frac{1}{3}$ & 0             & 0              &       &       &  & $\frac{1}{3}$ & 0             & 0              \\[1ex]
				\midrule

			\end{tabular}
		\end{center}
		{\raggedright
			In this table, the rows of $P(X,Y |  A,B)$ are for index $x \in \Xscr = \{0,1,2\}$ and columns for $y \in \Yscr = \{0,1,2\}$. For this example $P(a,b) = \frac{1}{4} \ \forall a,b$. \par}
	\end{table}
	The distribution $P(X, Y |  A,B)$ shown in Table \ref{Table : ccex} follows the no-signaling conditions \eqref{eq:ns}. Specifically, $P(x |  a,b) =\frac{1}{3} \ \forall \ x, a,b$ whereby $P(X|A,B)\equiv P(X |  A) $ and similarly  $P(Y |  A,B)  = P(Y |  B) $. However, it is impossible to find a common randomness $W$ such that the joint distribution factorizes as \eqref{eq:local_poly}. Details of this can be found in \cite{ananthram2007commonrandom}.
\end{examplec}

This counterexample implies that there are strategies within $\NS$ that can not be accessed using common randomness.
Anantharam and Borkar view this result as a limitation of common randomness.

Below we present another counterexample to Conjecture \ref{conj:conj0}, showing that the conjecture is false even for binary spaces.

\begin{examplec} \label{Table : ccex2}
	This is another counterexample to Conjecture
	\ref{conj:conj0} with strategy distribution  $P(X,Y |  A,B)$ shown as in \ref{Table : ccex2}.

	\begin{table}[!htp]
		\centering
		\begin{center}
			\begin{tabular}{c c c c c  c c c c}
				\midrule
				           $(a,b)$             &         & $P(x,y |   a,b)$ &               & \quad         & $(a,b)$ &               & $P(x,y |   a,b)$ &               \\
				[1ex]
				\midrule
				                                        &                  & $\frac{1}{2}$ & $0$   &        &         & $\frac{1}{2}$ & $0$              &  \\
				            [1ex]               $(0,0)$ &                  & $\frac{1}{3}$ & $\frac{1}{6}$ && $(0,1)$               & $0$              & $\frac{1}{2}$ \\
				      [1ex]
				\midrule       &               & $\frac{1}{2}$    & $0$  &         & & $0$              & $\frac{1}{2}$                \\
				      [1ex]
				$(1,0)$ &               & $\frac{1}{3}$    & $\frac{1}{6}$ &&$(1,1)$        &         $\frac{1}{2}$    & $0$          \\
				            [1ex]
				            \hline
			\end{tabular}
		\end{center}
		\scriptsize
		{\raggedright Here, $x \in \{0,1\}$ is represented row-wise and  $y \in \{0,1\}$  are indexed column-wise in the strategy distribution $P(X,Y |  A,B)$, where $a,b \in \{0,1\}$. Note that the no-signaling condition is satisfied $ P(x |  a,b)  \ = \  \frac{1}{2} \ \forall \ x,a,b \  $ and $ \  P(y |  b,a)= p(y|b)$. 
			\par}
	\end{table}
This example satisfies the no-signaling condition \eqref{eq:ns}, as shown below,
\begin{align}
    P(x|a,b) &= P(x|a)=  \frac{1}{2} \forall a,b,x \label{eq:ex-ns}\\
    P(y|a,b) & = \
    \begin{cases}
        \frac{1}{2}  \  \eef b =1 \ \forall \ y,a\\
        \frac{1}{6}  \ \eef y=1 , b=0 \ \forall \  a \\
        \frac{5}{6}  \ \eef y = 0 , b=0  \ \forall \  a
    \end{cases}\label{eq:ex-ns2}\\
    &= P(y|b) \nonumber \ \forall \ y,b
\end{align}
A proof that this counterexample lies in $\NS \backslash \Lrs$ is shown in Section \ref{sec: appendix}.
\end{examplec}

In the following section we discuss the geometry of no-signaling distributions  \cite{popescu1994nonlocquantum} and non-local quantum correlations \cite{barrett2005nonlocal} as described below.

\subsubsection{Geometry of no-signaling and Counter-example \ref{Table : ccex}} \label{sec: geomtry_NS}
The geometry of the $\NS$ polytope becomes significantly more complicated with increasing size of the spaces $\Ascr,\Bscr,\Xscr,\Yscr$. We give a description of this geometry for the case where these sets are binary.
For such action and observation spaces the extreme points of $\NS$ are well understood and classified as \textit{local} and \textit{non-local} \cite{barrett2005nonlocal}, as described below.
\begin{itemize}
\item \textbf{Local vertices}: These refer to the extreme points of the local polytope, $\Lrs$, or deterministic strategies in $\Lrs$ . There are a total of 16 local extreme points expressed as,
\begin{align}
	P(X , Y |  A, B)  &= \begin{cases}
		&1  \ , \  X = \alpha \ A \ \oplus\  \beta \ , \ \  Y = \gamma \ B \ \oplus  \ \delta  \\
		&0  \ ,  \text{else}
	\end{cases}\label{eq:local_ver}
\end{align} where $\alpha, \beta , \gamma , \delta , \in \{0,1\}$ and $x \in \Xscr,y \in \Yscr,a \in \Ascr,b \in \Bscr$.

\item \textbf{Non-local vertices}: These refer to the extreme points of the no-signaling polytope, $\NS$, excluding the local vertices. There are total 8 non-local vertices that can be expressed in the form,
\begin{align}\label{eq:non-local}
	P(X , Y |  A, B) &=\begin{cases}
		&\frac{1}{2} \  , \  X\ \oplus Y  = A \ .\ B \oplus \alpha \ A \oplus \beta \ B \ \oplus  \ \gamma  \\
		&0  \ ,\text{else}
	\end{cases}
\end{align} where $\alpha, \beta , \gamma , \in \{0,1\}$ and $x \in \Xscr,y \in \Yscr,a \in \Ascr,b \in \Bscr$.
\end{itemize}
Note, all of the local vertices are also extreme points of the $\NS$, and $\mathscr{L} \subset \mathscr{NS}$. This geometry is shown in the schematic diagram in Figure~\ref{fig} as per \cite{barrett2005nonlocal},\cite{deshpande2022quantum}. 

\begin{figure}
\centering
\includegraphics[width=0.6\linewidth]{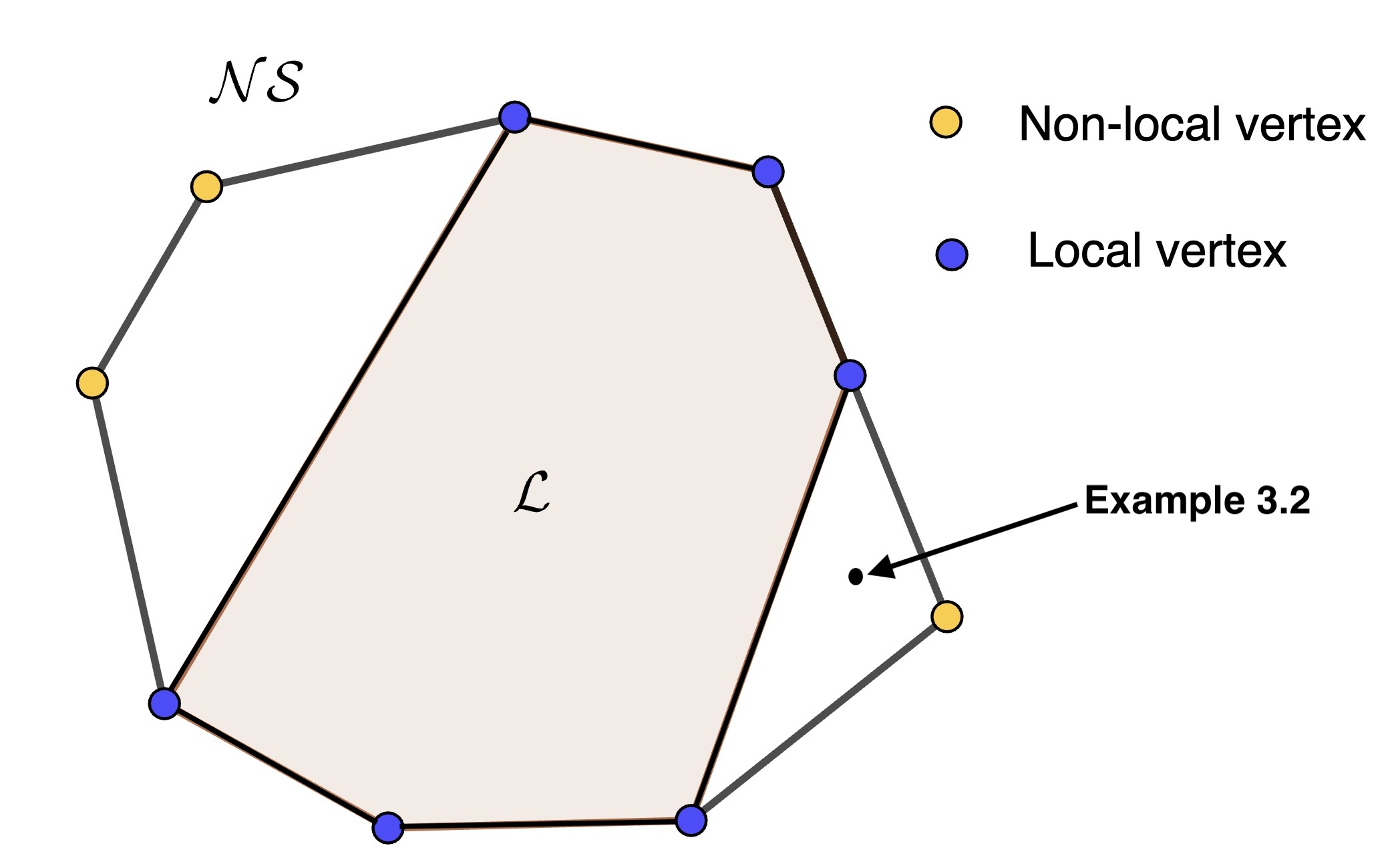}
\caption{A schematic representation of no-signaling polytope. L and NL label the vertices for local and non-local points. The set bounded by local vertices is $\Lrs$ and overall is the no-signaling polytope $\NS$. Example \ref{Table : ccex2} represents  a distribution $P(X,Y|A,B) \in \NS \backslash \Lrs$.}  
\label{fig}
\end{figure}

\section{No-signaling with active common randomness}
As evident so far, there are no-signaling strategies that lie outside $\Lrs$. This begs the following question: if \A{} and \B{} were allowed access to \textit{active} common randomness, what strategies would they achieve? We investigate this in the following section.

\subsection{No-signaling with active common randomness} \label{sec: method2}
Consider the following mechanism involving active common randomness.  \Wit{} is allowed to view the realized value of the pair $(A,B)$ and generates a $W$ whose distribution \textit{depends} on this value.  The realized value of $W$ is informed to \A{} and \B{}, but they are not informed of the realized value of $(A,B).$ \A{} and \B{} then set out into the field where they receive realized values of $A,B$ respectively. Thus in the field, \A{} has access to $A,W$ and \B{} has access to $B,W.$ They now generate $X,Y$ by independently and locally randomizing using this information, without any communication with each other. Under this  mechanism, their strategy $\strat$ takes the form
\begin{equation} \label{eq:pdf1}
\strat = \sum_W P(W | A, B)  P(X  |  A,W)  P(Y  |  B,W).
\end{equation}
This factorization satisfies the first three conditions from \eqref{eq:nseq}.
Indeed, it appears that access to a $W$ that depends on $(A,B)$ would have allowed \A{} and \B{} to gain access to some bits of each other's information and thereby the no-signaling condition would now be violated.
Or will it? We show below that there are strategies for which the no-signaling condition \textit{still holds}, even when players have access to active common randomness. In fact, we show something stronger: there are strategies in $\NS$ for which active common randomness is \textit{essential}, \ie, these strategies can be generated with access to active common randomness but cannot be generated with \textit{any} passive common randomness.

In other words, we show that the following conjecture is false.
\begin{conjecture}\label{conj:conj1}
If $P(X,Y |  A,B) $ satisfies \eqref{eq:pdf1} with active common randomness $W$ and $P(X,Y |  A,B) \notin \Lrs$, then $P(X,Y|  A,B)$ does not belong to $\NS$.
\end{conjecture}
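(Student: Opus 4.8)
The plan is to refute Conjecture~\ref{conj:conj1} by exhibiting a single strategy that is simultaneously (i) representable in the active-common-randomness form~\eqref{eq:pdf1}, (ii) outside the local polytope $\Lrs$, and (iii) inside the no-signaling set $\NS$. For the candidate strategy I would reuse one of the no-signaling strategies already on the table --- either the Anantharam--Borkar distribution of Example~\ref{Table : ccex} or the binary distribution $P(X,Y\mid A,B)$ of Example~\ref{Table : ccex2}. Properties (ii) and (iii) are already available for these: membership in $\NS$ is verified in the statements of the examples (via~\eqref{eq:ex-ns}--\eqref{eq:ex-ns2} for Example~\ref{Table : ccex2}), and the fact that they lie in $\NS\setminus\Lrs$ is the content of Anantharam and Borkar's argument for Example~\ref{Table : ccex} and of the appendix computation in Section~\ref{sec: appendix} for Example~\ref{Table : ccex2}. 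So the only thing left to do is supply an explicit active-common-randomness protocol, \ie, verify property (i).

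For property (i) I would use the obvious ``relay'' protocol, which in fact shows that \emph{every} conditional distribution $P(X,Y\mid A,B)$ --- a fortiori every strategy in $\NS$ --- admits a representation of the form~\eqref{eq:pdf1}. Let \Wit{} observe the realized pair $(a,b)$, draw an auxiliary sample $(x',y')$ from the target distribution $P(\cdot,\cdot\mid a,b)$, and set $W=(x',y')\in\Xscr\times\Yscr$; then $P(W=(x',y')\mid a,b)=P(x',y'\mid a,b)$, which genuinely depends on $(a,b)$ for the chosen examples, so $W$ is active. Hans then announces $W$ (but not $(a,b)$) to \A{} and \B{}. In the field \A{} observes $A$ and outputs $X=x'$ (the first coordinate of $W$), and \B{} observes $B$ and outputs $Y=y'$; both responses are deterministic given $W$, so $P(X\mid A,W)$ and $P(Y\mid B,W)$ are well-defined degenerate kernels. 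Summing over $W$ gives $\sum_{(x',y')}P(x',y'\mid a,b)\,\mathbf 1[x=x']\,\mathbf 1[y=y']=P(x,y\mid a,b)$, so~\eqref{eq:pdf1} holds; moreover, since $X$ and $Y$ are functions of $W$ alone, the first three conditions of~\eqref{eq:nseq} are satisfied automatically.

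Combining the two pieces, the chosen distribution satisfies~\eqref{eq:pdf1} with active $W$, lies outside $\Lrs$, yet lies in $\NS$ --- a direct contradiction to Conjecture~\ref{conj:conj1}. The same construction yields the stronger statement advertised in the text: since $\Lrs$ is precisely the set of strategies realizable with \emph{passive} common randomness (it is the convex hull of $\Brs$), any strategy in $\NS\setminus\Lrs$ is realizable with active common randomness but with no amount of passive common randomness, so active coordination is genuinely essential for it.

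There is no deep obstacle here: the entire weight of the result rests on the already-available fact that Examples~\ref{Table : ccex} and~\ref{Table : ccex2} lie in $\NS\setminus\Lrs$. If one instead wanted a more informative witness --- for instance a protocol in which $W$ carries as few bits as possible, or one that makes transparent which bits of the other agent's observation effectively leak --- that would call for an actual search over the kernels $P(W\mid A,B)$, $P(X\mid A,W)$, $P(Y\mid B,W)$, and that search, rather than the logic of the disproof, would be the only real work.
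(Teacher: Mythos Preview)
Your proposal is correct, and your relay protocol is a legitimate and considerably simpler disproof than the paper's. The paper instead constructs a bespoke active common randomness with $|\Wscr|=4$: it specifies $P(w\mid a,b)$ as in~\eqref{eq:pw2} and deterministic kernels $P(x\mid a,w)$, $P(y\mid b,w)$ as in~\eqref{eq:w1_x}--\eqref{eq:w1_y}, and then verifies by direct matrix computation that~\eqref{eq:pdf1} reproduces the table of Example~\ref{Table : ccex2} for each $(a,b)$.

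The difference is one of emphasis. Your relay protocol shows in one line that \emph{every} conditional $P(X,Y\mid A,B)$ admits a representation~\eqref{eq:pdf1}, so the conjecture collapses the moment one has any point of $\NS\setminus\Lrs$ in hand; this is the cleanest logical route. The cost is that in your protocol the agents ignore their observations entirely (their actions are functions of $W$ alone), so \Wit{} is effectively a central controller who dictates the outputs --- exactly the degenerate case~(b) the paper flags in its ``Where is the communication?'' discussion. The paper's explicit construction, by contrast, has $P(x\mid a,w)$ genuinely depend on $a$ and $P(y\mid b,w)$ on $b$, so the agents are nontrivially combining their private observations with the mediator's signal; this better supports the paper's conceptual message that active common randomness can leak just enough cross-information to escape $\Lrs$ while remaining in $\NS$. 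Your final paragraph already anticipates this trade-off accurately.
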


\subsubsection{Counterexample of Conjecture \ref{conj:conj1}:}
The strategy $\strat$ shown in Example[\ref{Table : ccex2}] serves as our counterexample. Note that $\strat \in \NS \backslash \Lrs$, whereby it suffices to construct a $W$ such \eqref{eq:pdf1} holds for this strategy.

We consider active common randomness $W$ with $\mathcal{W} = \{1,2,3,4\}$. We take $P(w|a,b)$ as
\begin{align}	\label{eq:pw2}
P(w| a = 0 , b=1) &= \frac{1}{4}  = P(w| a=1, b=1) \ \forall \ w \nonumber \\
	P(w| a=1,b=0) &= \
	\begin{cases}
		& {\small \frac{1}{2}}  , \ w=1  \\
		&{\small \frac{1}{3}}  , \ w=2\\
        		&{\small 0}  , \ w=3\\
		&{\small \frac{1}{6}}   , \ w=4
	\end{cases}\nonumber\\
P(w| a = 0,b =0) &= \
\begin{cases}
	&{\small \frac{1}{3}}  , \ w=1\\
    	&{\small \frac{1}{2}}   , \ w=2  \\
	&{\small \frac{1}{6}}   , \ w=3 \\
	&{\small 0}  , \ w=4
\end{cases}\ 
\end{align}
whereby $W$ is clearly active.
As we want to satisfy \eqref{eq:nseq}, that can also be written as,
\[P(x,y | a,b,w) = P(x|a,b,w) P(y|a,b,w) =P(x|a,w) P(y|b,w), \]
 it is enough to define the marginal distributions  $P(x|  a,w) , P(y | b,w)$. For ease of exposition, we represent $P(x|a,w)$ as a column vector where rows are indexed as $x=\{0,1 \}.$
\begin{align}
	&P(x|a=0,w=1,3) = 	\begin{bmatrix} 0 \\ 1 \end{bmatrix}
	= P(x|a=1,w=2,4) \nonumber\\
	&P(x|a=1,w=1,3) =  	\begin{bmatrix}1 \\0	\end{bmatrix}
	= P(x|a=0,w=2,4) \label{eq:w1_x}
	\end{align}
    For each fixed $x,a,w$, we define $P(x|a,b,w) = P(x|a,w)\ \forall \ b.$
	Similarly, we define $P(y|a,b,w) = P(y|b,w) \ \forall \  a$. The marginal distribution $P(y|b,w)$ is shown below, represented as a row vector with each column relating to $y \in \{0,1\}$.
	\begin{align}
		&P(y|b=0,w=1,2) = 	\begin{bmatrix}	1 &	0 \end{bmatrix}
		= P(y|b=1,w=2,4)\nonumber\\
		&P(y|b=1,w=1,3) =  	\begin{bmatrix} 0 & 1	\end{bmatrix}
		= P(y|b=0,w=3,4) \label{eq:w1_y}
		\end{align}

We now check that \eqref{eq:pdf1} holds. For any fixed $a,b$ we express $P(x,y|a,b)$ as a matrix where rows represent $x\in \{0,1\}$ and columns represent $y\in \{0,1\}$. Note, the product of $P(y|  b ,w) P(x |  a,w)$ is treated as an outer product between a column and row vector, for the sake of representation. Consider $	a = 1 $ and $  b = 1 $, using \eqref{eq:pw2} - \eqref{eq:w1_y}, we have
\begin{align*}
\sum_{w \in \Wscr}  P(x &|  a=1,w) P(y |  b =1,w)P(w |  a=1,b=1) \\
&=  \frac{1}{4} \sum_{w \in {1,2,3,4}} P(x |  a=1,w) P(y|  b =1,w) ,\\
&= \frac{1}{4}  \left( \begin{bmatrix} 1 \\ 0 \end{bmatrix}  	\begin{bmatrix} 0 & 1\end{bmatrix} +
\begin{bmatrix} 0 \\ 1 \end{bmatrix}  	\begin{bmatrix} 1 & 0\end{bmatrix} +
\begin{bmatrix} 1 \\ 0 \end{bmatrix}  	\begin{bmatrix} 0 & 1\end{bmatrix}+
\begin{bmatrix} 0 \\ 1 \end{bmatrix}  	\begin{bmatrix} 1 & 0\end{bmatrix} \right)\\
&= \frac{1}{2}\begin{bmatrix}0  & 1 \\1 & 0\end{bmatrix} \\ &= P(x,y |  a = 1 , b = 1).
\end{align*}
The last equality can be inferred by inspection with the top-left matrix in the table in Example \ref{Table : ccex2}. Similarly, for $a=0$ and $b=1$,
\begin{align*}
	\footnotesize
	\sum_{w \in \Wscr}  P(x& |  a=0,w)P(y|  b =1,w)  P(w |  a=0,b=1) \\
	&=  \frac{1}{4} \sum_{w \in {1,2,3,4}}P(x |  a=0,w)P(y|  b =1,w) ,\\
	&= \frac{1}{4}  \left( \begin{bmatrix} 0 \\ 1 \end{bmatrix}  	\begin{bmatrix} 0 & 1\end{bmatrix} +
	\begin{bmatrix} 1 \\ 0 \end{bmatrix}  	\begin{bmatrix} 1 & 0\end{bmatrix} +
	\begin{bmatrix} 0 \\ 1 \end{bmatrix}  	\begin{bmatrix} 0 & 1\end{bmatrix}+
	\begin{bmatrix} 1\\ 0 \end{bmatrix}  	\begin{bmatrix} 1 & 0\end{bmatrix} \right)\\
	&= \frac{1}{2}\begin{bmatrix}
		1  & 0 \\
		0 & 1
	\end{bmatrix} = P(x,y |  a = 0 , b = 1),
\end{align*}
as required.
Now for $a = 0 $ and $  b = 0 $ we have,
\begin{align*}
\sum_{w \in \Wscr}P(x &|  a=0,w) P(y |b =0,w)  P(w |   a,b) \\
&= \sum_{i \in \{1,2,3\}}P(x |  a=0,w=i) P(y  |  b =0,w=i)  P(w=i|a,b)\\ 
&= \frac{1}{3} \begin{bmatrix} 0 \\ 1 \end{bmatrix}  	\begin{bmatrix} 1 & 0\end{bmatrix} +
\frac{1}{2}  \begin{bmatrix} 1 \\ 0 \end{bmatrix}  	\begin{bmatrix} 1 & 0\end{bmatrix} +
\frac{1}{6} \begin{bmatrix} 0 \\ 1 \end{bmatrix}  	\begin{bmatrix} 0 & 1\end{bmatrix} \\
&=\begin{bmatrix}
\small \frac{1}{2}  & 0 \\
\small \frac{1}{3} & \small \frac{1}{6}
\end{bmatrix} = P(x,y  | a = 0, b = 0).
\end{align*}
Similarly for $a=1$ and $ b = 0$, we have
\begin{align*}
\sum_{w \in \Wscr}P(x& |  a=0,w) P(y |b =1,w)  P(w | a,b) \\
&=   \sum_{i  \in \{1,2,4\}}P(x |  a=0,w=i) P(y  |  b =1,w=i) P(w =i| a,b) \\
&= \frac{1}{2} \begin{bmatrix} 1\\ 0 \end{bmatrix}  	\begin{bmatrix} 1 & 0\end{bmatrix} +
\frac{1}{3}  \begin{bmatrix} 0 \\ 1 \end{bmatrix}  	\begin{bmatrix} 1 & 0\end{bmatrix} +
\frac{1}{6} \begin{bmatrix}  0\\ 1 \end{bmatrix}  	\begin{bmatrix} 0 & 1\end{bmatrix} \\
&=\begin{bmatrix}
	\footnotesize
\small \frac{1}{2 } & 0 \\
\small \frac{1}{3} & \small \frac{1}{6}
\end{bmatrix} = P(x,y  | a = 0 , b = 1).
\end{align*}
Thus, \eqref{eq:pdf1} holds for the marginal distributions \eqref{eq:w1_x}, \eqref{eq:w1_y}, creating the Example \ref{Table : ccex2}. Hence, this counterexample shows that even after giving active common randomness to the agents the resulting strategy still satisfies the no-signaling condition. Moreover, this strategy can only be generated using active common randomness.

\subsubsection{Where is the communication?}
Let us now understand what form of information exchange or communication has taken place between the agents. To answer this question, one needs to be precise about the parties involved in the communication and the timeline according to which parties access information. Recall that we assumed \Wit{} is informed of $A,B$. How does \Wit{} get access to this information? There are two possibilities that we can think of:

\begin{enumerate}
\item[(a)] In the first possibility, $A,B$ need to be sensed in the field, and the only agents that can do this sensing are \A{} and \B{}. Communication causally follows the sensing and it is between \A{} and \Wit{}, and \B{} and \Wit{}, respectively. $W$ is then realized and is communicated back from \Wit{} to \A{} and \B{}, respectively. This is in the same spirit as confidential communication with a mediator in incomplete information games~\cite{myerson1997game}; in this case $P(W|A,B)$ can be thought of as a Bayesian correlated strategy.
\item[(b)] Another possibility is that \Wit{}  may know of $A,B$ through a separate channel. Here there is communication between the environment that generates $(A,B)$ and \Wit{}, and one-way communication from \Wit{} to \A{} and \B{}.
\A{} and \B{}  do not communicate with each other after they are informed of $W$ and their observations.
\end{enumerate}

In both cases \A{} and \B{} do not communicate with each other \textit{directly}. In case (a) there is two-way private communication with \Wit{}, whereas in case (b), \Wit{} has separate access to $A,B$. If case (b) is the view taken, then one must ask why it does not degenerate the decentralized nature of the problem. Indeed if \Wit{} could access both $A,B$, he may as well signal the optimal centralized action to the players via $W$.
On the other hand if case (a) is the view of the mechanism, we are compelled to ask if such communication is permitted as part of the definition of the information structure.
Literally read, the information structure only says that players ``cannot communicate \textit{with each other}'', leaving a loophole where players could communicate confidentially with a mediator. The no-signaling condition allows for strategies where players exploit this loophole. That $\NS$ also contains such strategies is our first main finding of this paper.

To us, this finding highlights the following:
\begin{enumerate}
\item In defining the information structure of the problem, one must have clarity on what amounts to communication and what exact communication is permitted by the information structure.
\item One must accept that the no-signaling condition is not tantamount to ``no communication'' in the wide sense of the word ``communication''.
\end{enumerate}

Our leaning is that confidential communication as in case (a) is also a violation of the information structure. For instance, consider that the set of $W$'s is as large as the set of all observations, \ie, $|\mathcal{W}| = |\Ascr \times \Bscr|$, then when \Wit{}  could encode $W$ to include all information about the pair $(A,B)$, whereby \A{} and \B{} would indirectly know the observation of the other via $W$, and when they choose their respective actions $X,Y$, using their own observation and $W$, they have effectively referenced the observation of other agent.  We feel this should not be permitted by the information structure.

A natural question arises: how does the no-signaling condition not get violated in spite of this information exchange? The answer to this lies in the nature of the condition itself. No-signaling is a \textit{statistical condition}, because it talks of the \textit{probability measure} of the random variables involved. In a frequentist view, this says that \textit{on average} $X$ must be independent of $B$ given $A$ (and similarly for $Y$). This does not preclude that there are sample-paths for which some form of communication has happened. Contrast this with specific mechanisms which are expressed in terms of random variables instead.

\subsection{Strategy within $\NS$ with one-way communication} \label{sec:one-way}
Consider the same strategy $P(X,Y |  A,B)$ as given in Example \ref{Table : ccex2}, considering $x,y \in \{0,1\}$ and $a,b \in \{0,1\}$. We now show a protocol where this distribution is created by \A{} and \B{} using one-way communication and passive common randomness. The process is as follows,
\begin{itemize}
\item \Wit{} produces passive common randomness $W \in \{1,2,3\}$, uniformly at random. He shares the realization of $W$ with \A{} and \B{}.
\item During the game, \A{} makes his observation $A$. He generates his action  $X$ as per the following equation, for $W=1,2$,
\begin{align}
	X   = \
	\begin{cases}
		A \oplus 1, &\eef W=1\\
		A, &\eef W=2
	\end{cases} \label{eq: x_one_way}
\end{align}
For $W=3$, the action $X$ is generated uniformly at random, i.e $X \in \{0,1\}$ is equally likely (and hence not dependent on $A$).

\item \A{} shares the realization of his observation $A$ and the action $X$ with \B{} when $W=3$, otherwise there is no communication or sharing.
\item \B{} makes his observation $B$. Based on his available information he generates his action $Y$ as per the following equation,
\begin{align}\label{eq: y_one_way}
	Y  & = \
	\begin{cases}
		B &\eef W=1 \\
		0 &\eef W=2 \\
		X \oplus A.B &  \eef   W = 3
	\end{cases}
\end{align}
Note: For $W=3$, \A{} has shared $X,A$ with \B{}, so this is the case of one-way communication.
\end{itemize}
We now show that this process creates the strategy $P(X,Y |  A,B)$  from Example \ref{Table : ccex2}. We have,
\begin{align}\label{eq:com}
	P(X,Y|  A,B) & = \sum_{W \in \{1,2,3\}} P(X,Y |  A,B,W) P(W) \nonumber  \\
	&= \frac{1}{3} \sum_{W \in \{1,2,3\}}  P(X,Y |  A,B,W).
\end{align}
Comparing \eqref{eq: x_one_way} and \eqref{eq: y_one_way} with marginal distribution \eqref{eq:w1_x}, \eqref{eq:w1_y} for $w = 1,2$, we note that $P(X,Y|A,B,W)$ for $W=1,2$ here takes the form as shown in Table \ref{Table:2w3}.

		\begin{table}[ht]
		\centering		
		\caption{Distribution $P(x,y|  a,b,w=1,2)$ with common randomness $W$  }
		\label{Table:2w3}
		\begin{tabular}{c|c|c|c}			
			\hline
			\multicolumn{2}{c|}{$P(x,y|a,b,w=1)$}                                                              &                      \multicolumn{2}{c}{$P(x,y | a,b,w=2)$}                                           \\[0.5ex] 
			\hline
			$a=0,b=0$ & $a=0,b=1$ & $a=0,b=0$ & $a=0,b=1$\\[0.5ex] \hline    
			& &&\\                                    
			$\begin{bmatrix}0  & 0 \\	1 & 0 \end{bmatrix}$& 
			$\begin{bmatrix}0  & 0 \\	0 & 1 \end{bmatrix}$ &                  
			 $\begin{bmatrix}1  & 0 \\	0 & 0 \end{bmatrix}$ & 
			 $\begin{bmatrix}1  & 0 \\	0 & 0 \end{bmatrix}$                \\
			& &&\\ \hline
			$a=1,b=0$ & $a=1,b=1$ & $a=1,b=0$ & $a=1,b=1$\\[0.5ex] \hline
			 & &&\\
			 $\begin{bmatrix}1  & 0 \\	0 & 0 \end{bmatrix}$& 
			 $\begin{bmatrix}0  & 1 \\	0 & 0 \end{bmatrix}$               &       
			 $\begin{bmatrix}0  & 0 \\	1 & 0 \end{bmatrix}$   &  
			 $\begin{bmatrix}0  & 0 \\	1 & 0 \end{bmatrix}$                \\
			& &&\\\hline
		\end{tabular}
		\scriptsize
		{\raggedright \begin{center}Here, $x \in \{0,1\}$ represented with row-wise elements and  $y \in \{0,1\}$  are indexed column-wise in the strategy distribution $P(x,y|  a,b,w=1,2)$, where $a,b \in \{0,1\}$\end{center} \par}
	
	\end{table}

For the case of $W=3$ in this section, $P(X,Y|A,B,W=3)$ can be created as follows,
\begin{itemize}
\item For the pair $(a,b) \in \{(0,0), (0,1), (1,0)\} $, the term $ A.B  = 0 $, so, for these three cases, $Y = X$ as per \eqref{eq: y_one_way}.
Thus for these three cases,
\begin{align}\label{eq:cc1}
P(x,y|a,b,w=3) & = \I{x=y}P(x|a,b,w=3)=\I{x=y}P(x|w=3) \nonumber\\
&= \frac{1}{2}\begin{bmatrix}
	1  & 0 \\
	0 & 1
\end{bmatrix}.
\end{align}
\item For the pair $(a,b) = (1,1)$, the term $ A.B  = 1 $, so $Y \neq X$.
\begin{align}\label{eq:cc2}
P(x,y|a,b,w=3) & = \I{x \neq y}P(x|a,b,w=3)=\I{x \neq y}P(x|w=3) \nonumber\\
&= \frac{1}{2}\begin{bmatrix}
	0  & 1 \\
	1 & 0
\end{bmatrix}.
\end{align}
\end{itemize}

Using \eqref{eq:com}, we want to verify that the $\strat$ matches Example \eqref{Table : ccex2}. Consider $(a,b) = (0,0)$, and using the matrix representation again,
\begin{align}
	P(x,y|  a =b=0) & = \sum_{w \in \{1,2,3\}} P(x,y |  a=b=0,w) P(w) \nonumber  \\
	&= \frac{1}{3} \left( \begin{bmatrix}0  & 0 \\	1 & 0 \end{bmatrix}+ \begin{bmatrix}1  & 0 \\	0 & 0 \end{bmatrix} + \frac{1}{2}\begin{bmatrix}
		1  & 0 \\
		0 & 1
	\end{bmatrix} \right) \nonumber\\
	& =  \begin{bmatrix} \frac{1}{2} & 0 \\	\frac{1}{3} & \frac{1}{6} \end{bmatrix},
\end{align}
which is same as top-left matrix of Example \eqref{Table : ccex2}. Next, consider $(a,b) = (1,1)$, as below,
\begin{align}
	P(x,y|  a =b=0)
	&= \frac{1}{3} \left( \begin{bmatrix}0  & 1 \\	0 & 0 \end{bmatrix}+ \begin{bmatrix}0  & 0 \\	1 & 0 \end{bmatrix} + \frac{1}{2}\begin{bmatrix}
		0  & 1 \\
		1 & 0
	\end{bmatrix} \right) \nonumber\\
	& =  \begin{bmatrix} 0&\frac{1}{2} \\  \frac{1}{2}&0 \end{bmatrix}.
\end{align}
Similarly, it can be verified for the remaining $(a,b)$ pairs, using \eqref{eq:com}, along with Table \eqref{Table:2w3} and  \eqref{eq:cc1}.

We can once again reflect on the communication involved in this example. \A{} and \B{} were provided access to only passive common randomness by \Wit{}, which we noted was insufficient to generate the strategy $\strat$. However they can combine this passive common randomness with explicit, direct communication to generate this strategy. It is known in games that equilibria of noncooperative games that are equipped with a noisy communication system can be simulated by a correlated strategy involving one way communication with a mediator, and vice-versa~\cite{myerson1997game}. Something analogous to this seems to have played out here: instead of communicating confidentially with \Wit{}, the players have used the passive common randomness from \Wit{} as noise and coupled that with direct communication to produce the same strategy as they did with confidential communication.

This interchangeability of direct noisy communication and confidential mediated communication poses a dilemma for ascertaining if an information structure is respected. Even if one takes the view that confidential mediated communication is permissible under a decentralized information structure, the above example shows that such indirect communication is statistically indistinguishable from certain forms of direct communication (both the underlying mechanisms produce the same conditional distribution or strategy). Thus from the strategy alone one cannot say if players have in fact communicated. This further underlines the need to have a careful (possibly sample-path based)  definition of the information structure.

\section{No-signaling and posterior distributions} \label{sec:ns}
We now consider the no-signaling conditions, \ie, \eqref{eq:ns}  again in an attempt to interpret them in a somewhat different way.

\begin{proposition}
A strategy $\strat$ satisfies the no-signaling conditions
\eqref{eq:ns}  if and only if following posterior condition is satisfied
\begin{equation}
	P(A|  B,Y) = P(A |  B) , \ \ P(B |  A, X) = P(B |  A)   \label{eq:posterior}
\end{equation}
\end{proposition}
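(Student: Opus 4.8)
The plan is to read each no-signaling equation as a conditional-independence statement and then exploit the symmetry of conditional independence (equivalently, Bayes' rule). Throughout one works with the joint law $P(A,B,X,Y)=P(A,B)\,\strat$ and, after marginalizing out the irrelevant action, the identity $P(Y\mid A,B)=P(Y\mid B)$ in \eqref{eq:ns} is exactly the assertion $Y\indep A\mid B$, while the first identity in \eqref{eq:posterior}, $P(A\mid B,Y)=P(A\mid B)$, is exactly $A\indep Y\mid B$. These are the same relation, so the equivalence of the ``$Y$'' no-signaling condition with the first posterior condition is immediate; the ``$X$'' no-signaling condition and $P(B\mid A,X)=P(B\mid A)$ are handled identically by interchanging the roles of $(A,X)$ and $(B,Y)$. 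Since the standing assumption guarantees that every nonempty event has positive probability, every conditional written below is well-defined as soon as its conditioning event is possible, and the cases where it is not are vacuously consistent.

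For the forward direction I would simply compute. Assuming \eqref{eq:ns}, fix $b,y$ with $P(B=b,Y=y)>0$; then
\begin{align*}
P(a \mid b, y) &= \frac{P(a,b)\,P(y \mid a,b)}{\sum_{a'} P(a',b)\,P(y \mid a',b)} = \frac{P(a,b)\,P(y\mid b)}{P(y\mid b)\sum_{a'}P(a',b)} = \frac{P(a,b)}{P(b)} = P(a\mid b),
\end{align*}
where the middle step uses $P(y\mid a,b)=P(y\mid b)$ (note $P(y\mid b)>0$ here). The identity $P(b\mid a,x)=P(b\mid a)$ follows from the same computation with $(A,X)\leftrightarrow(B,Y)$.

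For the converse, assume \eqref{eq:posterior} and fix $a,b$ with $P(A=a,B=b)>0$. For $y$ with $P(B=b,Y=y)>0$,
\begin{align*}
P(y \mid a,b) &= \frac{P(a,b,y)}{P(a,b)} = \frac{P(b,y)\,P(a\mid b,y)}{P(a,b)} = \frac{P(b,y)\,P(a\mid b)}{P(a,b)} = \frac{P(b,y)}{P(b)} = P(y\mid b),
\end{align*}
and for $y$ with $P(B=b,Y=y)=0$ one has $P(a,b,y)=0$, hence $P(y\mid a,b)=0=P(y\mid b)$, so the equality holds for all $y$; the argument for $P(x\mid a,b)=P(x\mid a)$ is symmetric. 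This closes the equivalence.

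I do not expect a genuine obstacle: the computations are one-line applications of Bayes' rule, and the real content of the proposition is the conceptual observation that ``no signal passes from $A$ to $Y$'' and ``$Y$ reveals nothing about $A$ beyond what $B$ already reveals'' are two faces of the single relation $A\indep Y\mid B$. The only point requiring care is bookkeeping: making sure that every ratio displayed has a strictly positive denominator under the paper's positivity assumption, and disposing cleanly of the degenerate $(b,y)$ (resp.\ $(a,x)$) pairs as above.
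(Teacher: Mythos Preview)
Your proposal is correct and follows essentially the same route as the paper: both directions are one-line Bayes' rule computations expressing the symmetry of conditional independence $A\indep Y\mid B$ (and $B\indep X\mid A$). The only difference is cosmetic---you write the normalizing constant explicitly as a sum and track the zero-probability cases, whereas the paper leaves these implicit---but the argument is the same.
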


\begin{proof}
Consider forward direction, \ie, no-signaling condition implies posterior,
\begin{equation*}
	P(A|B,Y) = \frac{P(Y|A,B) P(A,B)}{P(B,Y)} = \frac{P(Y|B) \  P(A|B)}{P(Y|B)} = P(A|B)
\end{equation*}
Here, we used the no-signaling condition in obtaining the second equation. Similarly it can be shown for $P(B |  A, X) = P(B |  A)$.

Conversely, the condition \eqref{eq:posterior} on the posteriors implies the no-signaling condition,
\begin{equation*}
	P(X|A,B) = \frac{P(B|A,X) P(A,X)}{P(A,B)} = \frac{P(B|A)}{P(B|A)} P(X|A) = P(X|A)
\end{equation*}
Similarly it can be shown for $P(Y|A,B) = P(Y|B)$.
\end{proof}

We know that the no-signaling condition \eqref{eq:ns} says that action $X$ given $A$ is chosen without reference to observation $B$, and similarly for $Y$. The posterior condition, \eqref{eq:posterior} says that the belief of \B{} about \A{}'s observation $A$ is independent of \B{}'s action $Y$ given his observation $B$. In other words, \B{}'s action has no additional ``information'' to provide about \A{}'s observation than \B{}'s observation already provides.

Notice in \eqref{eq: y_one_way} in Section \ref{sec:one-way}, the action $Y$ is created using both $A,B$ and also action $X$. Despite this the  overall distribution $\strat$ follows the no-signaling condition and also the posterior condition \eqref{eq:posterior}. In this case, \B{} knows the realization of observation $A$ of \A{} and uses it in constructing $Y$, and yet, posterior condition \eqref{eq:posterior} holds.

\section{Discussion}
It has been shown through simple counterexample to Conjecture \ref{conj:conj1}  that our current interpretation of the no-signaling condition insofar as describing information structure in decentralized control, is incomplete. From the above examples we have constructed, we have seen that the no-signaling condition allows strategies where direct and indirect communication are performed. A statistical condition such as no-signaling seems to us to be insufficient to capture the presence or absence of communication, and to thereby serve as a boundary on permissible strategies under a decentralized information structure.

In summary, these counterexamples challenge our assumptions about the scope and limitations of no-signaling.
They call for deeper investigations into the nature of no-signaling and a more careful definition of information structure in decentralized control.
The question ``what is the set of all strategies that adhere to the information structure'' still remains unanswered.
Alongside, another question remains unanswered: what is the meaning of no-signaling if it is not ``no-communication''?

\bibliography{ref.bib}

\section{Appendix}\label{sec: appendix}

We now show that the distribution $P(X,Y |  A,B)$ in Example \ref{Table : ccex2} is another counterexample of  Conjecture\ref{conj:conj0}. We wish to prove the claim : `the distribution $P(X,Y |  A,B)$ in Example \ref{Table : ccex2} lies in $\NS \backslash \Lrs$.' The proof of this claim uses the CHSH inequality \cite{clauser1969chsh}, that is described in short below.

Consider two agents \A{} and \B{}, with observations $a,b \in \{0,1\}$ and actions $x,y \in \{0,1\}$. Then by $\langle(-1)^x (-1)^y\rangle_{a,b}$ denote,
\begin{equation}\label{eq:corr}
	\langle(-1)^x (-1)^y\rangle_{a,b} = \sum_{y \in \{0,1\}}\sum_{x \in \{0,1\}} P(x,y|a,b) (-1)^x (-1)^y.
	\end{equation}
The CHSH inequality is given by,
\begin{align}
	\langle(-1)^x (-1)^y\rangle_{0,0} &  + \langle(-1)^x (-1)^y\rangle_{0,1} + \nonumber\\
	 \langle(-1)^x (-1)^y\rangle_{1,0} -& \langle(-1)^x (-1)^y\rangle_{1,1} \ \ \leq 2 \label{eq:CHSH}
	\end{align}

\begin{lemma}
    	If a distribution $\strat$ lies in $\Lrs$, then it must satisfy the CHSH inequality \eqref{eq:CHSH}.
\end{lemma}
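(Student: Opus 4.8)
The plan is to use that the left-hand side of the CHSH inequality \eqref{eq:CHSH} is an \emph{affine} functional of the strategy $\strat$ --- this is immediate from \eqref{eq:corr}, since each correlator $\langle(-1)^x(-1)^y\rangle_{a,b}$ is a fixed linear combination of the entries $P(x,y|a,b)$ --- together with the description of $\Lrs$ as the convex hull $\conv(\Brs)$ of the $16$ deterministic local strategies \eqref{eq:local_ver}, recalled in Section~\ref{sec:local_stat}. An affine functional over a polytope attains its maximum at a vertex, so it suffices to verify \eqref{eq:CHSH} at each deterministic local strategy; writing a general $P\in\Lrs$ as a convex combination of these vertices and using affineness then transfers the bound to all of $\Lrs$.

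The heart of the argument is the evaluation of the CHSH combination at a generic vertex \eqref{eq:local_ver}, with parameters $\alpha,\beta,\gamma,\delta\in\{0,1\}$. For such a strategy the actions are deterministic, $x=\alpha a\oplus\beta$ and $y=\gamma b\oplus\delta$, so
\[
\langle(-1)^x(-1)^y\rangle_{a,b}=(-1)^{\alpha a\oplus\beta}\,(-1)^{\gamma b\oplus\delta}=s_a\,t_b,
\]
where $s_a:=(-1)^{\alpha a\oplus\beta}\in\{-1,+1\}$ depends only on $a$ and $t_b:=(-1)^{\gamma b\oplus\delta}\in\{-1,+1\}$ depends only on $b$. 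Hence the left-hand side of \eqref{eq:CHSH} equals
\[
s_0t_0+s_0t_1+s_1t_0-s_1t_1=s_0(t_0+t_1)+s_1(t_0-t_1).
\]
Since $t_0,t_1\in\{-1,+1\}$, exactly one of $t_0+t_1$ and $t_0-t_1$ vanishes and the other equals $\pm2$, so this quantity is $\pm2s_0$ or $\pm2s_1$; its value is therefore exactly $\pm2$, in particular $\le2$ at every vertex. Combined with the reduction above, this establishes the lemma.

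I do not anticipate any real obstacle; the one point deserving care is that the convex-combination reduction must be carried out at the level of the \emph{strategies} (where the CHSH functional is genuinely affine), not at the level of the randomizing mechanism. An alternative that bypasses vertex enumeration works directly from the factorization \eqref{eq:local}: set $E_a(w):=\sum_x P(x|a,w)(-1)^x\in[-1,1]$ and $F_b(w):=\sum_y P(y|b,w)(-1)^y\in[-1,1]$, so that conditional independence of $X$ and $Y$ given $(A,B,W)$ gives $\langle(-1)^x(-1)^y\rangle_{a,b}=\sum_w P(w)\,E_a(w)F_b(w)$; then for each fixed $w$, abbreviating $E_a=E_a(w)$ and $F_b=F_b(w)$,
\[
|E_0(F_0+F_1)+E_1(F_0-F_1)|\le |F_0+F_1|+|F_0-F_1|=2\max\{|F_0|,|F_1|\}\le2,
\]
and averaging over $w$ with weights $P(w)$ preserves the bound $\le2$.
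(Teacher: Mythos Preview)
Your proof is correct. Both routes you give are standard and sound; the vertex argument is a genuinely different organization from the paper's. The paper does not invoke $\Lrs=\conv(\Brs)$ or enumerate the sixteen deterministic strategies; instead it works directly from the factorization \eqref{eq:local}, expands each correlator as $\sum_w P(w)\,E_a(w)F_b(w)$ (in your notation), and then bounds the CHSH combination by an algebraic rearrangement for each fixed $w$ before averaging. In spirit this is exactly your ``alternative,'' though your use of the identity $|F_0+F_1|+|F_0-F_1|=2\max\{|F_0|,|F_1|\}$ is cleaner and more transparent than the manipulation the paper carries out. What your primary approach buys is conceptual clarity and immediate generalization: once one knows the CHSH expression is affine in $\strat$, any facet-type inequality for $\Lrs$ reduces to checking finitely many deterministic vertices, without ever unpacking the hidden variable $W$. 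The factorization approach, by contrast, makes explicit \emph{why} the bound holds---the product structure $E_aF_b$ conditional on $w$---which ties the inequality more directly to the local hidden-variable interpretation. One small remark: in your vertex argument you note the CHSH value at each vertex is exactly $\pm2$; you only need $\le2$, so you could equally well absorb the convex-hull step into your alternative argument (deterministic vertices are just the case where $E_a,F_b\in\{\pm1\}$), making the two routes essentially the same computation viewed from different angles.
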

		\begin{proof}
			Suppose there exists a passive common randomness $W$, such that $P(X,Y|A,B,W)$ satisfies \eqref{eq:nseq}. Then the correlation \eqref{eq:corr} can be rewritten as,
			\begin{align*}
				\langle(-1)^x &(-1)^y\rangle_{a,b} = \sum_{y \in \{0,1\}}\sum_{x \in \{0,1\}} \sum_{w} P(x,y|a,b,w) P(w) (-1)^x (-1)^y \\
				& = \sum_{w}  P(w)\sum_{y \in \{0,1\}}(-1)^y P(y|a,b,w) \sum_{x \in \{0,1\}} P(x|a,b,w) (-1)^x
				\end{align*}
The LHS of \eqref{eq:CHSH} simplifies as
				\begin{align}
					&\sum_{w} P(w) \sum_{y =0}^{1}(-1)^y  \sum_{x =0}^1 (-1)^x \nonumber\\
					&\quad( P(y|0,0,w) P(x|0,0,w) + P(y|1,0,w) P(x|1,0,w)+ \nonumber\\
					 & \quad P(y|0,1,w) P(x|0,1,w)- P(y|1,1,w) P(x|1,1,w) ) \label{eq:xCHSH}\\
					 &=\sum_{w} P(w) \sum_{x,y =0}^{1}(-1)^{x+y} ( \ P(x|0,0,w)\left(P(y|0,0,w) + P(y|1,1,w)\right) + \nonumber\\
					 \ \ & P(x|1,1,w) \left(P(y|0,0,w) - P(y|1,1,w) \right) \ ) \nonumber\\
					& =  \sum_{w} P(w) \sum_{x,y =0}^{1} (-1)^{x+y}	\left( P(x,y|0,0,w) +  P(x,y|1,1,w) \right) \nonumber\\
					& \leq  \sum_{w} P(w) \sum_{x,y =0}^{1}  	\left( P(x,y|0,0,w) +  P(x,y|1,1,w) \right) = 2. \nonumber
 					\end{align}
 					Here, \eqref{eq:xCHSH} is simplified using the property that $X\indep B|A,W$ and $Y \indep A|B,W$. This gives us the CHSH inequality\eqref{eq:CHSH}. Note, this proposition and proof is a manifestation of classical CHSH property as originally studied in \cite{clauser1969chsh}.
			\end{proof}
Now coming back to the claim about our Example \ref{Table : ccex2}.

\begin{lemma}
    The distribution $P(X,Y |  A,B)$ in Example \ref{Table : ccex2} lies in $\NS \backslash \Lrs$.
\end{lemma}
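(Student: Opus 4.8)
The plan is to invoke the contrapositive of the preceding lemma: since every distribution in $\Lrs$ satisfies the CHSH inequality \eqref{eq:CHSH}, it suffices to show that the distribution of Example \ref{Table : ccex2} \emph{violates} \eqref{eq:CHSH} while still belonging to $\NS$. Membership in $\NS$ has already been checked in \eqref{eq:ex-ns}--\eqref{eq:ex-ns2}, so the entire remaining argument is the (short) CHSH computation. Thus I would only need to: (i) read the four correlations off the table, (ii) substitute into the left-hand side of \eqref{eq:CHSH}, and (iii) observe that the value exceeds $2$.

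Concretely, first I would compute the quantities $\langle(-1)^x(-1)^y\rangle_{a,b}$ of \eqref{eq:corr} from the four $2\times 2$ blocks in Example \ref{Table : ccex2}, with rows indexed by $x$ and columns by $y$. This gives
\[
\langle(-1)^x(-1)^y\rangle_{0,0} = \tfrac12 - \tfrac13 + \tfrac16 = \tfrac13, \qquad
\langle(-1)^x(-1)^y\rangle_{0,1} = \tfrac12 + \tfrac12 = 1,
\]
\[
\langle(-1)^x(-1)^y\rangle_{1,0} = \tfrac13 \ \ (\text{the }(1,0)\text{ block equals the }(0,0)\text{ block}), \qquad
\langle(-1)^x(-1)^y\rangle_{1,1} = -\tfrac12 - \tfrac12 = -1 .
\]
Then substituting into the left-hand side of \eqref{eq:CHSH},
\[
\langle(-1)^x(-1)^y\rangle_{0,0} + \langle(-1)^x(-1)^y\rangle_{0,1} + \langle(-1)^x(-1)^y\rangle_{1,0} - \langle(-1)^x(-1)^y\rangle_{1,1} = \tfrac13 + 1 + \tfrac13 + 1 = \tfrac83 > 2 .
\]
Hence the distribution fails the CHSH inequality, so by the previous lemma it is not in $\Lrs$; together with the already-verified $\NS$ membership this yields $P(X,Y\,|\,A,B) \in \NS \backslash \Lrs$, as claimed.

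I do not anticipate a genuine obstacle: once the CHSH lemma is available, this is a one-line arithmetic verification. The only point requiring care is bookkeeping — correctly matching the rows and columns of each $2\times2$ block to the values of $x$ and $y$, and in particular tracking the minus sign on the $(1,1)$ term. The table in Example \ref{Table : ccex2} is deliberately arranged so that the $(1,1)$ block is perfectly anticorrelated (contributing $+1$ after the sign flip), which is exactly what pushes the CHSH sum above the classical bound of $2$.
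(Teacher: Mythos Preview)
Your proposal is correct and follows essentially the same approach as the paper: verify $\NS$ membership via \eqref{eq:ex-ns}--\eqref{eq:ex-ns2}, compute the four correlators $\langle(-1)^x(-1)^y\rangle_{a,b}$ directly from the table (obtaining $\tfrac13,1,\tfrac13,-1$), and observe that the CHSH combination equals $\tfrac83>2$, whence by the preceding lemma the distribution is not in $\Lrs$. The arithmetic and bookkeeping match the paper exactly.
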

\begin{proof}
Using \eqref{eq:ex-ns}-\eqref{eq:ex-ns2}, we know that $P(x|a,b) = P(x|a) \ \forall b$ and $P(y|a,b) = P(y|b) \  \forall a$, so we know that this Example \eqref{Table : ccex2} lies in $\NS$.
It suffices to show that $\strat$ violates the condition \eqref{eq:CHSH}. Consider the correlation for $(a,b) = (0,0)$,
\begin{align*}
	\langle(-1)^x (-1)^y\rangle_{0,0} &= \sum_{y \in \{0,1\}}\sum_{x \in \{0,1\}} P(x,y|a=0,b=0) (-1)^x (-1)^y \\
	&= \frac{1}{2} (1) + \frac{1}{6}(1) + \frac{1}{3} (-1) = \frac{1}{3}.
	\end{align*}
Similarly for $(a,b) = (1,0)$, the correlation is  $ \langle(-1)^x (-1)^y\rangle_{1,0} = \frac{1}{3}$.
For $(a,b) = (1,1)$,
\begin{align*}
	\langle(-1)^x (-1)^y\rangle_{1,1} &= \sum_{y \in \{0,1\}}\sum_{x \in \{0,1\}} P(x,y|a=1,b=1) (-1)^x (-1)^y \\
	&= \frac{1}{2} (-1) + \frac{1}{2} (-1)= -1.
\end{align*}
For $(a,b) = (0,1)$,
\begin{align*}
	\langle(-1)^x (-1)^y\rangle_{0,1} &= \sum_{y \in \{0,1\}}\sum_{x \in \{0,1\}} P(x,y|a=0,b=1) (-1)^x (-1)^y \\
	&= \frac{1}{2} (1) + \frac{1}{2} (1)= 1.
\end{align*}
So, the LHS of \eqref{eq:CHSH} is
\begin{equation}
	\frac{1}{3} + 1 + \frac{1}{3} - (-1) = 2 + \frac{2}{3} > 2,
	\end{equation}
    showing that $\strat $ in  Example \ref{Table : ccex2} lies in $ \NS\backslash\Lrs.$
	\end{proof}
\end{document}